\documentclass[letterpaper,10pt,conference]{ieeeconf}
\IEEEoverridecommandlockouts
\overrideIEEEmargins
\usepackage{amsfonts,amsmath,amssymb,mathtools,bbm,bm}
\usepackage{diagbox}
\usepackage{hyperref}
\usepackage{graphicx,caption,subcaption}
\usepackage{booktabs}
\usepackage{algorithm}
\usepackage[noend]{algpseudocode}
\newdimen{\algindent}
\setlength\algindent{1.5em}
\usepackage{cite}
\usepackage{xcolor}
\newtheorem{theorem}{Theorem}

\newcommand{\edit}[1]{#1}
\algnewcommand{\LeftComment}[2]{\Statex \hspace{#1\algindent} \textcolor{olive}{\# #2}}
\algnewcommand{\Comm}[1]{\hspace{0.5\algindent} \textcolor{olive}{\# #1}}

\def\cD{{\cal D}}

\def\cK{{\cal K}}

\def\cR{{\cal R}}

\def\cZ{{\cal Z}}
%%%%%%%%%%%%%%%%%

\newcommand{\eg}[0]{\textit{e.g. }}
\newcommand{\ie}[0]{\textit{i.e. }}

\title{\LARGE \bf
Multi-agent Path Finding for Cooperative Autonomous Driving
}

\author{Zhongxia Yan, Han Zheng, Cathy Wu % <-this % stops a space
\thanks{This work was supported by the National Science Foundation (NSF) CAREER award 2239566 and the MIT Amazon Science Hub.}% <-this % stops a space
% \thanks{$^{1}$Zhongxia Yan is with the Department of Electrical Engineering and Computer Science, Massachusetts Institute of Technology, Cambridge, MA, 02139 USA {\tt\small zxyan@mit.edu}}%
% \thanks{$^{2}$Han Zheng is with the Department of Mechanical Engineering, Massachusetts Institute of Technology, Cambridge, MA, 02139 USA {\tt\small hanzheng@mit.edu}}%
% \thanks{$^{3}$Cathy Wu is with the Laboratory for Information and Decision Systems, the Department of Civil and Environmental Engineering, and the Institute of Data, Systems, and Society, Massachusetts Institute of Technology, Cambridge, MA, 02139 USA {\tt\small cathywu@mit.edu}}%
\thanks{Zhongxia Yan, Han Zheng and Cathy Wu are with the Laboratory for Information \& Decision Systems (LIDS), Massachusetts Institute of Technology, Cambridge, MA 02139, USA. Email:\{{\tt\small zxyan,hanzheng,cathywu\}@mit.edu}}
}
% Keywords:
% Path Planning for Multiple Mobile Robots or Agents
% Intelligent Transportation Systems
% Motion and Path Planning
% Multi-Robot Systems?
% Automation Technologies for Smart Cities?

\begin{document}

\maketitle
\thispagestyle{empty}
\pagestyle{empty}

\begin{abstract}
Anticipating possible future deployment of connected and automated vehicles (CAVs), cooperative autonomous driving at intersections has been studied by many works in control theory and intelligent transportation across decades. Simultaneously, recent parallel works in robotics have devised efficient algorithms for multi-agent path finding (MAPF), though often in environments with simplified kinematics. In this work, we hybridize insights and algorithms from MAPF with the structure and heuristics of optimizing the crossing order of CAVs at signal-free intersections. We devise an optimal and complete algorithm, Order-based Search with Kinematics Arrival Time Scheduling (OBS-KATS), which significantly outperforms existing algorithms, fixed heuristics, and prioritized planning with KATS. The performance is maintained under different vehicle arrival rates, lane lengths, crossing speeds, and control horizon. Through ablations and dissections, we offer insight on the contributing factors to OBS-KATS's performance. Our work is directly applicable to many similarly scaled traffic and multi-robot scenarios with directed lanes.
\end{abstract}

\section{Introduction}
\label{sec:introduction}
The development of autonomous driving technology raises the possibility of intelligent coordination of connected and automated vehicles (CAVs) towards societal objectives, such as reducing congestion and fuel consumption, as well as improving safety. Therefore, many works on intelligent transportation systems \cite{dresner2008multiagent,wu2021flow,yan2022unified,xu2021comparison} have studied potential positive impacts of cooperative driving of CAVs. In particular, signal-free intersections are regions where coordination of CAVs is critical to safety and efficiency. These intersections are not restricted to intelligent transportation systems, but also are commonly found in real-world robotic warehouses at crossings between directed lanes \cite{li2021lifelong}. In this work, we adapt insights and algorithms from multi-agent path finding (MAPF) for the coordination of a cooperative driving intersection with rich vehicle kinematics. Like \cite{xu2021comparison}, we divide the overall task of coordinating CAVs into two sequential phases, first optimizing the CAV crossing order then computing order-conditioned vehicle trajectories. To define the crossing order, we divide the intersection into a reservation system where the arrival and departure times at \textit{subzones} are planned by our low-level Kinematic Arrival Time Scheduling (KATS), which substitutes for a low-level path planning algorithm. While existing works optimize the crossing order with First-In-First-Out (FIFO) heuristics \cite{dresner2008multiagent} and Monte Carlo Tree Search (MCTS) \cite{xu2019cooperative}, we demonstrate that our MAPF-inspired high-level prioritized planning \cite{erdmann1987multiple} and Order-based Search (OBS) algorithms obtain significantly superior solution quality. We obtain order-conditioned vehicles trajectories with trajectory optimization rather than single agent path planning algorithms like A* search \cite{hart1968formal} or SIPP \cite{phillips2011sipp}, allowing us to bypass kinematics limitations.

In summary, our main contributions are:\begin{enumerate}
    \item Incorporating insights from MAPF, we design an algorithm for ordering vehicle crossings at a signal-free intersection and translating the crossing order to vehicle trajectories within a kinematic bicycle model.
    \item We empirically characterize the OBS-KATS's significant improvement of vehicle delays over baselines under a wide range of intersection settings.
    \item We prove the soundness, completeness, and optimality of OBS-KATS for finding vehicle crossing orders.
\end{enumerate}
\edit{We provide full source code for reproducibility on \href{https://github.com/mit-wu-lab/mapf-autonomous-driving}{GitHub}}.

\section{Related Work}
\label{sec:related_work}
\subsection{Cooperative Driving at Intersections}
Cooperative driving of connected and automated vehicles (CAVs) has been studied in intelligent transportation settings ranging from adaptive cruise control \cite{van2006impact} to traffic networks with diverse structures \cite{yan2022unified}. In particular, several recent cooperative driving strategies have been proposed for optimizing the crossing order of CAVs at signal-free intersections \cite{xu2021comparison}. The First-In-First-Out (FIFO) strategy has been studied by \cite{dresner2008multiagent} as a heuristic crossing order. Given an existing crossing order, the Dynamic Resequencing method \cite{zhang2018decentralized} inserts a newly arriving vehicles into a suitable position, but keeps the rest of the order unchanged. On the other hand, \cite{xu2019cooperative} demonstrates that Monte Carlo Tree Search (MCTS) can be used to obtain a more optimal crossing order by periodically replanning the existing order. Our work significantly improves upon these previous methods in the cooperative driving setting by leveraging insights and algorithms from multi-agent path finding.

\subsection{Multi-agent Path Finding}
The classical multi-agent path finding (MAPF) problem \cite{stern2019multi} is a NP-hard \cite{yu2013structure} problem which seeks to find the shortest collision-avoiding paths for a set of agents in a discrete graph. Since the space of joint agent trajectories is intractably large to consider \cite{sharon2015conflict}, nearly all MAPF algorithms rely on repeatedly calling single-agent path planner such as A* search \cite{hart1968formal} or SIPP \cite{phillips2011sipp}, while holding paths of some set of other agents as constraints.

Prioritized planning (PP) \cite{erdmann1987multiple,silver2005cooperative} plans one agent trajectory at a time in random agent order while avoiding collisions with all previously planned trajectories. Conflict-based Search (CBS) \cite{sharon2015conflict} is a seminal solver which relies on backtracking tree-search to resolve pairwise agent collisions, and Priority-based Search (PBS) \cite{ma2019searching} is a scalable extension of CBS, albeit suboptimal and incomplete. We derive significant algorithmic insights from these works.

Recent methods have aimed at improving the solution quality \cite{li2021anytime,huang2022anytime} and completeness \cite{okumura2022priority,li2022mapf,okumura2023improving} under large-scale settings with up to thousands of agents.

\subsection{Continuous MAPF and Multi-robot Motion Planning}
As classical MAPF is discrete time and space, continuous settings may be discretized for application of MAPF algorithms \cite{honig2018trajectory}. Recent continuous MAPF works have investigated planning with continuous time directly \cite{andreychuk2022multi,andreychuk2021improving,kasaura2022prioritized}, gut require simplified agent kinematics such as constant speed along graph edges. \edit{Relatedly, \cite{li2023intersection} applies MAPF to intersection traffic settings with unbounded acceleration}. Finally, works in multi-robot motion planning \cite{kottinger2022conflict,okumura2022quick} have applied sampling based methods like probabilistic roadmaps \cite{kavraki1996probabilistic} to plan over settings with continuous 2D space and time. As traffic systems typically contain well-defined lanes, formulating our problem with continuous 2D space is unnecessary.

\section{Problem Formulation}
\label{sec:formulation}
We formulate the cooperative driving problem at a single intersection, though this formulation is applicable to any single-junction traffic scenario (e.g. highway merging \cite{yan2022unified}). Consider a four-way intersection with directions $i \in \cD = \{1, 2, 3, 4\}$; along each direction, a single \textit{entering} lane (towards intersection) and \textit{exiting} lane has length $\ell_\text{lane}$. Vehicle routes $r = (i, j) \in \cR = \cD^2$ are considered, and a vehicle $k$ with length $\ell_k$ traveling along route $r$ passes the intersection from direction $i$ to direction $j$, either heading straight or making a left- or right-turn. If space is available, vehicle $k$ enters the system from an entering lane at a \edit{deterministic} rate $\lambda_i$ (veh/hr/lane) with initial speed $v_0$, its route is sampled according to $r_k \sim P(r = (i, \cdot))$ to account for different turn probabilities. Towards collision avoidance, we design a division of the intersection into 16 reservation subzones $z \in \cZ$ (Fig.~\ref{fig:intersection_subzones}), which may only be occupied by one vehicle at a given time, based on geometries of crossing vehicle routes. While the subzone design in \cite{xu2019cooperative,xu2021comparison} does not permit simultaneous left turns, our design permits four turning vehicles (two left-turn and two right-turn) vehicles to pass the intersection simultaneously. Longitudinal position along route $r$ is defined in the range $[0, \ell_r]$. The position, speed, and acceleration of a vehicle $k$ at step $t$ is denoted as $x_k(t)$, $v_k(t)$, and $a_k(t)$, respectively. The start and end positions of each subzone $z$ along each passing route $r$ are denoted as $x_{z, r, 0}$ and $x_{z, r, 1}$, respectively. Vehicles are subjected to maximum straight speed $\overline v$ and maximum turning speed $\overline v_{r, z} \leq \overline v$ in a subzone, as well as acceleration limits $[\underline a, \overline a]$. We assume perfect sensing, inter-vehicle communication, and control. 

Like works before us \cite{xu2021comparison}, the objective at each planning step is to find the ordering of vehicles crossing the intersection which minimizes total vehicle delay, which is defined as the difference between travel time $\sum_k t(x_k \geq \ell_{r_k})$ and minimum travel time $\sum_k \underline t(x_k \geq \ell_{r_k})$ absent of other vehicles; we use the notation $t(x_k \geq x)$ to denote the first time such that $x_k \geq x$. The crossing order is a partial ordering which defines precedence relationships for vehicles whose routes cross the same reservation subzone, but not vehicles whose routes do not overlap. For vehicles $k$ and $k'$, let $k \prec k'$ denotes that $k$ precedes $k'$ in the crossing order. For a vehicle $k$, let ${\prec} k$ denote the set of all vehicles preceding $k$. Vehicles already passing through or moving away from the intersection do not need to be ordered.

\begin{figure}[!t]
    \centering
    \begin{subfigure}[b]{0.7\columnwidth}
        \includegraphics[width=\columnwidth]{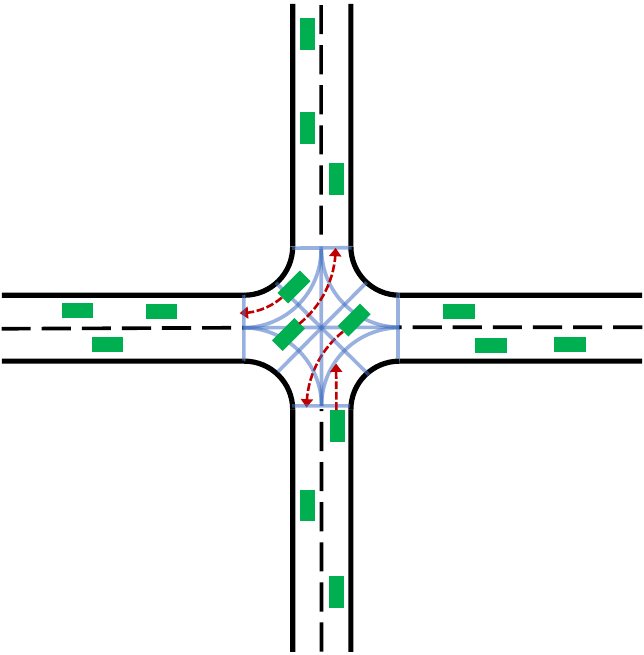}
        \caption{Intersection with reservation subzones}
    \end{subfigure}
    \begin{subfigure}[b]{0.9\columnwidth}
        \includegraphics[width=\columnwidth]{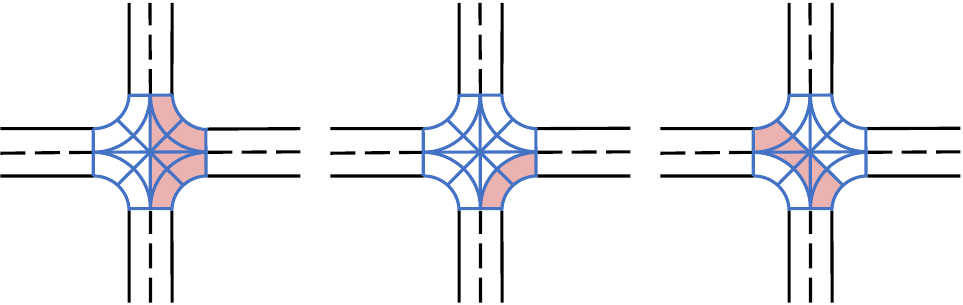}
        \caption{Subzones for straight, right-, and left-turn routes}
    \end{subfigure}
    \caption{\textbf{Geometry of our studied intersection.} Our algorithms are applicable to junctions in general, \eg merging, as the exact geometry is encoded by the start and end positions of subzones along vehicle routes, $x_{z, r}$.}
    \label{fig:intersection_subzones}
\end{figure}

\subsection{Kinematic Bicycle Model}
While we plan with the longitudinal 1D model of vehicles along their routes, all control inputs are translated to and executed on a kinematic bicycle model \cite{rajamani2011vehicle}. Here, both front and rear wheels of the vehicle are aggregated into a singular wheel at the midpoint. The control inputs are acceleration $a$ and front wheel steering angle $\delta$. The distance from the center of gravity to front and rear wheels is half of vehicle length $\ell_k$. $\psi$ denotes the heading. $\beta$ denotes the slip angle.

\begin{multline}
\dot{x}_x=v \cos (\psi+\beta) \hspace{1cm} \dot{x}_y=v \sin (\psi+\beta) \hspace{1cm} \dot{v} = a \\
 \dot{\psi}=\frac{v \cos (\beta)}{\ell_k}\tan \left(\delta\right) \hspace{1cm} \beta=\tan^{-1}\left(\frac{\tan \delta}{2}\right)\\
\label{bm_eq}
\end{multline}

\section{Optimal Crossing Order Search with MAPF}
\label{sec:method}
We extract elements of previous works on cooperative intersection crossing \cite{xu2021comparison} and design algorithms for finding the optimal crossing order from a MAPF perspective: for the high-level crossing order search, our PP and OBS algorithms integrate traffic structures. For low-level subzone reservation (akin to single-agent path planning), our KATS technique schedules arrival and departure times at reservations subzones. We use the computed crossing order to plan vehicle trajectories sequentially, with trajectory optimization. We sketch our overall method in Algorithm~\ref{alg:obs_kats}.

\subsection{Kinematic Arrival Time Scheduling (KATS)}
\label{sec:arrival_time}
MAPF algorithms rely on numerous calls to a single-agent path planner (typically A* or SIPP, which are fast but often require models with limited kinematics, like constant speed \cite{kasaura2022prioritized}). On the other hand, general mixed-integer trajectory optimization is expressive but cannot be directly used as a single-agent path planner due to the high computational overhead. Therefore, we refine the arrival times scheduling technique sketched by \cite{xu2021comparison} into a proxy for a single-agent path planner: Kinematic Arrival Time Scheduling (KATS). KATS can be efficiently invoked by high-level planners for computing an optimal crossing order.

KATS plans the subzone arrival and departure times for a vehicle $k$ on route $r$. Let $t_d({\prec} k, z)$ be the latest time that vehicles preceding $k$ occupy subzone $z$. Let $t_a(k, z) = t(x_k \geq x_{z, r, 0})$ be the arrival time of vehicle $k$ at subzone $z \in \cZ(r) \subset \cZ$ and $t_d(k, z) = t(x_k - \ell_k \geq x_{z, r, 1})$ be the departure time. KATS computes the interval $[t_a(k, z), t_d(k, z)]$ for all $z \in \cZ(r)$. The arrival time at the first subzone $z_0$ along route $r$ is computed by
\begin{equation}
    t_a(k, z_0) = \max\left \{\underline{t_a}(k, z_0), \max_{z \in \cZ(r)} \left\{t_d({\prec} k, z) - \delta t(z_0, z)\right\}\right \}
    \label{eq:t_a}
\end{equation}
where the first term $\underline{t_a}(k, z_0)$ is the minimum arrival time to $z_0$ (independent of other vehicles) and the second term is the earliest crossing start time such that the vehicle travels at \textit{constant} speed within the intersection and reaches every subzone after it becomes available. $\delta t(z_0, z) = \frac{x_{z, r, 0} - x_{z_0, r, 0}}{v_{z_0}}$ is the travel time from $z_0$ to $z$ at the \textit{maximum attainable crossing speed} $v_{z_0} \leq \overline{v}_{r, z}$. To achieve the minimum time to enter subzone $z_0$, the vehicle accelerates at $\overline a$ for as long as feasible, then travels at maximum speed $\overline{v}$ if feasible, then decelerates at $\underline a$ if needed to $\overline{v}_{r, z}$. A crossing order is \textit{infeasible} if some vehicle has insufficient distance to decelerate to $v \leq \overline{v}_{r, z}$. While KATS enforces collision-free subzones, it does not detect rear-end collisions with other vehicles along the approaching and departing lanes, and thus may be overly optimistic, as discussed in Section~\ref{sec:traj_opt}. Thus, a crossing ordering giving a following vehicle precedence over a leading vehicle may be \textit{feasible} but is unlikely to be optimal and will be pruned by heuristics below.

\begin{theorem}
If a crossing order is feasible, calling KATS in this order obtains the optimal constant-speed crossing times for all vehicles consistent with the crossing order.
\end{theorem}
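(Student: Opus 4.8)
The plan is to prove a sharper, constructive claim by induction: when KATS processes the vehicles in any linear extension of the (necessarily acyclic) partial crossing order, it assigns to each vehicle $k$ the earliest crossing start time $s_k := t_a(k,z_0)$ and hence, since the vehicle then traverses the intersection at the fixed constant speed $v_{z_0}$, the earliest subzone occupancy intervals $[t_a(k,z),t_d(k,z)]$ attainable by any kinematically feasible constant-speed crossing schedule consistent with the order. Optimality of total delay then follows immediately, since each vehicle's exit time $t(x_k\ge\ell_{r_k})$ is nondecreasing in $s_k$ and its minimum travel time is a constant. First I would record that the only quantities in~\eqref{eq:t_a} depending on other vehicles are the $t_d({\prec}k,z)$, and every $k'\in{\prec}k$ precedes $k$ in \emph{every} linear extension, so the computed schedule is well-defined and extension-independent; this licenses an induction on the processing index, with already-committed vehicles (those past the junction) entering only as fixed terms in $t_d({\prec}k,z)$.

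For the inductive step I assume every $k'\in{\prec}k$ has already been given by KATS its earliest feasible consistent times. \textbf{Feasibility.} By~\eqref{eq:t_a}, $s_k\ge\underline{t_a}(k,z_0)$, so an accelerate--cruise--decelerate profile reaches $z_0$ at time $s_k$ at speed $v_{z_0}$ (feasibility of the order supplies the braking distance needed to meet every $\overline v_{r,z}$); and for each $z\in\cZ(r)$, the second term of~\eqref{eq:t_a} gives $t_a(k,z)=s_k+\delta t(z_0,z)\ge t_d({\prec}k,z)\ge t_d(k',z)$ for all $k'\prec k$ sharing $z$, so the constant-speed crossing clears every subzone only after its preceding occupants vacate, i.e.\ the whole schedule respects all precedence constraints (the constraints toward vehicles succeeding $k$ are discharged when those vehicles are later processed). \textbf{Optimality.} Let $\{\hat s_{k'}\}$ be any feasible constant-speed schedule consistent with the order. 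By the inductive hypothesis $\hat s_{k'}\ge s_{k'}$ for $k'\prec k$, and since such a vehicle's subzone departure times are a fixed offset plus its start time, it occupies every shared subzone $z$ at least until $t_d({\prec}k,z)$; consistency then forces $\hat t_a(k,z)\ge t_d({\prec}k,z)$, i.e.\ $\hat s_k\ge t_d({\prec}k,z)-\delta t(z_0,z)$ for every $z$, while kinematic feasibility forces $\hat s_k\ge\underline{t_a}(k,z_0)$; taking the maximum recovers $\hat s_k\ge s_k$, closing the induction and giving $\hat t_a(k,z)\ge t_a(k,z)$ and $\hat t_d(k,z)\ge t_d(k,z)$ for all $z$.

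The points needing genuine care, and what I expect to be the main obstacle, are first the kinematic lemma that $\underline{t_a}(k,z_0)$ is the true minimum arrival time at $z_0$ over all admissible approach profiles (a monotonicity property of the time-optimal bang--coast--bang profile under the $[\underline a,\overline a]$ and speed bounds) together with the characterization of when deceleration to $\overline v_{r,z}$ is feasible; and second pinning down the comparison class precisely, namely that ``constant-speed crossing times consistent with the order'' ranges over schedules in which each vehicle holds its maximum attainable crossing speed $v_{z_0}$ and only the start time varies, so that a preceding vehicle's occupancy is governed by a fixed offset plus its start time and the monotone induction above goes through; if one instead allowed an arbitrary constant crossing speed $\le v_{z_0}$, one must additionally argue that a slower crossing can never advance any subzone occupancy interval, even though it relaxes the start-time inequalities. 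I would also state explicitly that ``feasible'' and ``optimal'' are meant within the KATS abstraction, with subzones as mutually exclusive resources and no modeled rear-end coupling on the lanes, consistent with the paper's later caveat that KATS can be optimistic; the remainder is a clean monotone induction driven by the recursion~\eqref{eq:t_a}.
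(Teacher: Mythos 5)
Your proposal is correct and follows essentially the same route as the paper's (sketched) proof: induction along the crossing order, using the minimality of $\underline{t_a}(k, z_0)$ and the maximum in Equation~\ref{eq:t_a} to show each vehicle's start time is a lower bound for any consistent constant-speed schedule. Your version is a more careful elaboration of that sketch, and the caveats you flag (the comparison class and the KATS abstraction) are exactly the points the paper leaves implicit.
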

\begin{proof}
(Sketch) Consider the first vehicle $k$ in the crossing order. By construction, no other acceleration strategy besides the one above allows $k$ to arrive at $z_0$ earlier than $\underline{t_a}(k, z_0)$ or with greater speed than $v_{z_0}$ above. Therefore, arriving earlier than $t_a(k, z_0)$ either contradicts the minimum arrival time or enters some $z$ before ${\prec} k$ has departed. Thus $k$ achieves the optimal delay. By induction on crossing order, KATS obtains optimal delay for all vehicles.
\end{proof}

\subsection{Prioritized Planning (PP) with Traffic Heuristics}
\label{sec:pp}
Naively optimizing the crossing order with PP \cite{erdmann1987multiple} simply samples $n_\text{orders}$ random orders, evaluates the total delay of each order with KATS, then return the best crossing order. As naive PP does not leverage traffic structures and performs poorly, we augment naive PP with the two pruning heuristics introduced by \cite{xu2021comparison} for their MCTS-based method: \textbf{1)} When sampling a random crossing order for PP, we constrain every vehicle $k$ to be sampled \textit{after} its leader (vehicle in front of $k$) in the lane. Sampling one by one from a space $\kappa$ of $\leq |\cD| = 4$ vehicles at a time, the overall search space reduces from $O(|K|!)$ to $O(4^{|K|})$ orderings. \textbf{2)} We select $k \in \kappa$ if its minimum arrival time $\underline{t_a}(k, z)$ at each subzone $z \in \cZ(r_k)$ is earlier than that of all other vehicles; if no vehicle satisfies this condition, we uniformly randomly sample a vehicle whose minimum arrival time is not later than all other vehicles at all subzones. We apply these intuitions to design our order-based search next.
\begin{algorithm}
    \caption{OBS-KATS}\label{alg:obs_kats}
    \begin{algorithmic}[0]

    \Procedure{CooperativeDriving}{}
    \For{$h = 0$ to $H$}
        \State $\cK \leftarrow$ set of all current vehicles
        \For{each newly entered vehicle $k$, with $\cK \prec k$}
            \State arrival/departure times $\leftarrow$ KATS($k$, $t_d({\prec}k, \cZ)$)
            \State \textsc{TrajOpt}($k$, ${\prec}k$, arrival/departure times)
        \EndFor

        \State execute next step along vehicle trajectories

        \If{$h \mod H_c = 0$}
            \State vehicles on entering lanes $K \subseteq \cK$
            \State crossing order $\leftarrow$ \textsc{OBS}($K$, $t_d({\prec} K, \cZ), n_\text{orders}$)
            \For {each vehicle $k$ in crossing order}
                \State arr./depart. times $\leftarrow$ \textsc{KATS}($k$, $t_d({\prec}k, \cZ)$)
                \State \textsc{TrajOpt}($k$, ${\prec}k$, arrival/departure times)
            \EndFor
        \EndIf
    \EndFor
    \EndProcedure
    \vspace{0.1cm}
    \Procedure{KATS}{$k$, $t_d({\prec} k, \cZ)$} \Comm{Section~\ref{sec:arrival_time}}
    \LeftComment{0}{\textit{$k$: vehicle $k$ to plan}}
    \LeftComment{0}{\textit{$t_d({\prec} k, \cZ)$: latest subzone departure times of ${\prec} k$}}
    \State $t_a(k, z_0) \leftarrow$ apply Equation~\ref{eq:t_a}
    \For{$z \in \cZ(r_k)$}
        \State $t_a(k, z) \leftarrow t_a(k, z_0) + \delta t(z_0, z)$
        \State $t_d(k, z) \leftarrow t_a(k, z_0) + \frac{x_{z, r_k, 1} + \ell_k - x_{z_0, r_k, 0}}{\overline{v}_{r_k, z}}$
    \EndFor
    \State \Return $\{[t_a(k, z), t_d(k, z)] \mid z \in \cZ(r_k)\}$
    \EndProcedure
    \vspace{0.1cm}
    \Procedure{OBS}{$K$, $t_d({\prec}K, \cZ)$, $n_\text{orders}$} \Comm{Section~\ref{sec:pbs}}
    \LeftComment{0}{\textit{$K$: set of vehicles to obtain a crossing order for}}
    \LeftComment{0}{\textit{$t_d({\prec} K, \cZ)$: latest subzone departure times of ${\prec} K$}}
    \LeftComment{0}{\textit{$n_\text{orders}$: number of orders to obtain}}

    \LeftComment{1}{\textit{heuristic rule 1}}
    \State $\bm{\prec} \leftarrow$ initial ordering of vehicles along lanes
    \State orders $\leftarrow$ empty list
    \Procedure{\textsc{Expand}}{$K$, $\bm{\prec}$, $n_\text{orders}$}
        \LeftComment{2}{\textit{expand a search node...}}
        \State $\kappa \leftarrow \{k \in K \mid \forall k' \in K\ k' \not \prec k\}$
        \While{$\exists k \in \kappa\ \forall k' \in \kappa_{\neq k} \left(k \ll k'\right)$}
            \State $\bm{\prec} \leftarrow \bm{\prec} \cup\ \{k \prec k'\ \forall k' \in \kappa_{\neq k}\}$
            \State let $K \leftarrow K \setminus \{k\}$
            \State update $\kappa \leftarrow \{k \in K \mid \forall k' \in K\ k' \not \prec k\}$
        \EndWhile
        \If{$\kappa = \emptyset$}
            \State construct order from $\bm{\prec}$, append to orders
            \State compute delay(order)
            \State \Return $1$
        \EndIf
        \State $n \leftarrow 0$
        \LeftComment{2}{\textit{heuristic rule 2: $k$ is closer to the intersection}}
        \State $k, k'\leftarrow$ two vehicles $\in \kappa$ s.t. $k \not\ll k'$ and $k' \not\ll k$
        \LeftComment{2}{\textit{1st child}}
        \State run \textsc{KATS} on $k'$ and any necessary $k'' \succ k'$
        \If{schedules for $k'$ and any $k''$ are feasible}
            \State $n \mathrel{\raisebox{0.19ex}{$\scriptstyle+$}}=$ \textsc{Expand}($K$, $\bm{\prec} \cup\ \{k \prec k'\}$, $\lceil\frac{n_\text{orders}}{2}\rceil$)
        \vspace{-0.4\baselineskip}
        \EndIf
        \State\textbf{if }{$n = n_\text{orders}$} \Return $n$
        \LeftComment{2}{\textit{2nd child: will be skipped if $n_\text{orders} = 1$}}
        \State run \textsc{KATS} on $k$ and any necessary $k'' \succ k$
        \If{schedules for $k$ and any $k''$ are feasible}
            \State $n \mathrel{\raisebox{0.19ex}{$\scriptstyle+$}}=$ \textsc{Expand}($K$, $\bm{\prec} \cup\ \{k' \prec k\}$, $n_\text{orders} - n$)
        \vspace{-1.2\baselineskip}
        \EndIf
        \State \Return $n$
    \vspace{0.5\baselineskip}
    \EndProcedure
    \textsc{Expand}($K$, $\bm{\prec}$, $n_\text{orders}$)
    \State \Return $\arg\min_{\text{order} \in \text{orders}}$ delay(order)
    \EndProcedure
    \end{algorithmic}
\end{algorithm}

\subsection{Order-based Search (OBS) with Traffic Heuristics}
\label{sec:pbs}
Inspired by the PBS algorithm \cite{ma2019searching} in MAPF, we design the OBS algorithm (Algorithm~\ref{alg:obs_kats}) for searching for crossing orders. While PBS searches the space of all partial orderings, we search the space of all partial orderings \textit{consistent with a total ordering of vehicles crossing each subzone}.

Each node of the OBS depth-first search tree corresponds to a set of vehicles $K$ which are yet to be ordered and an ordering ${\bm \prec}$ across all vehicles. We define $\kappa \subseteq K$ as the set of vehicles with no preceding vehicles in $K$. For two vehicles $k$ and $k'$, we define the operator $k \ll k'$ to denote the following property: the subzone departure times of $k$ and all vehicles preceding $k$ is less than the subzone arrival times of $k'$ and all vehicles succeeding $k'$ for every subzone $z \in \left(\cZ(r_k)\cup\cZ(r_{{\prec} k})\right)\cap\left(\cZ(r_{k'})\cup\cZ(r_{{\prec} {k'}})\right)$. Intuitively, if $k \ll k'$ and $\kappa = \{k, k'\}$, then $k \prec k'$ because $k$ and ${\prec} k$ crossing earlier does not delay $k'$ or ${\succ} k'$. If $k \not\ll k'$, even if $k$ departs all subzones earlier than $k'$ arrives, we cannot let $k \prec k'$ because some vehicle preceding $k$ departs some subzone later than some vehicle succeeding $k'$. If $\exists k \in \kappa\ \forall k' \in \kappa_{\neq k}$ such that $k \ll k'$, we assign precedences $k \prec \kappa_{\neq k}$, remove $k$ from $K$, and update $\kappa$ with the new $K$. Otherwise, as in PBS, we branch over the precedence of two vehicles in $\kappa$. If $\kappa$ is empty, we read the crossing order from ${\bm \prec}$.

We apply similar traffic heuristics to OBS as described for PP. To control the search duration, we limit the number of orders found to $n_\text{orders}$ total by distributing a budget of $\lceil \frac{n_\text{orders}}{2} \rceil$ orders to the first child and the remaining to the second child. This strategy allows exploration to be focused on the shallower nodes in the tree search, where decisions are more influential than decisions deeper in the tree search.

We now prove several properties about OBS.

\begin{theorem}
All orders found by OBS are crossing orders, \ie OBS is sound.
\end{theorem}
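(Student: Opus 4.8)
The plan is to exhibit two invariants maintained by the OBS depth-first search and read soundness off of them. Recall that a crossing order is a strict partial order on the vehicles that, restricted to the vehicles crossing any given subzone, is a total order; since any restriction of an acyclic relation is acyclic, it suffices to prove that the ordering $\bm{\prec}$ that OBS appends to \texttt{orders} is acyclic and makes every pair of vehicles sharing a subzone comparable. The two invariants I would track, at every point of the recursion, are: (I) $\bm{\prec}$ is acyclic; and (II) writing $R$ for the set of vehicles already removed from the working set $K$, there is no $\bm{\prec}$-edge from a vehicle in $K$ to a vehicle in $R$.

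First I would verify the invariants by induction on the search. At initialization $\bm{\prec}$ is the union of the per-lane ``leader-before-follower'' chains (heuristic rule~1), which is acyclic, and $R=\emptyset$, so both hold. For the inductive step, edges are added in only two places. The while-loop $\ll$-rule picks some $k\in\kappa$ with $k\ll k'$ for all $k'\in\kappa_{\neq k}$, adds the edges $\{k\prec k' : k'\in\kappa_{\neq k}\}$, and then moves $k$ from $K$ to $R$; the new edges therefore leave the just-removed vehicle $k$ and point into $\kappa_{\neq k}\subseteq K$, so they are ``$R\to K$'' edges and (II) persists. The branching step adds a single edge between two members of $\kappa$, i.e.\ within $K$. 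Given (II), any cycle in $\bm{\prec}$ must lie entirely inside $R$ or entirely inside $K$; an $R$-internal edge $m\prec m'$ was created at the moment $m$ was removed, when $m'\in\kappa\subseteq K$, hence $m'$ is removed strictly later, so $R$-internal edges run from earlier- to later-removed vehicles and cannot close a cycle; and $\bm{\prec}$ restricted to $K$ stays acyclic because it begins as disjoint chains, the $\ll$-rule merely deletes a source vertex from it, and branching adds one edge out of a source, which cannot create a return path. This gives (I).

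Then I would conclude as follows. An order is appended only when $\kappa=\emptyset$; since $\bm{\prec}$ is acyclic and $\kappa$ is the set of $\bm{\prec}$-sources inside $K$, this forces $K=\emptyset$, so along that branch every input vehicle has been removed. Let $k_1,\dots,k_n$ be the order in which vehicles are removed (each removal happening in the $\ll$-rule). When $k_i$ is removed it is the chosen source of the then-current set $K_i$, which still contains $k_j$ for every $j>i$, and the rule adds $k_i\prec k'$ for all $k'\in\kappa_i\setminus\{k_i\}$; if $k_j$ is itself a source of $K_i$ this already gives $k_i\prec k_j$, and otherwise $k_j$ lies downstream, in the finite acyclic graph $\bm{\prec}|_{K_i}$, of some source $s\in\kappa_i$, yielding $k_i\prec s\prec\cdots\prec k_j$ (or the path starts at $s=k_i$). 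Hence the transitive closure of the final $\bm{\prec}$ is a \emph{total} order on the input set, so in particular a partial order comparably relating every co-subzone pair; ``constructing the order from $\bm{\prec}$'' then produces a well-defined total order of the vehicles crossing each subzone, which is exactly a \emph{crossing order}. I would close by noting that the \textsc{KATS} calls and feasibility tests at internal nodes only prune subtrees and never change an output $\bm{\prec}$, so they are irrelevant to soundness.

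The step I expect to be the main obstacle is pinning down invariant~(II): the crucial, slightly non-obvious observation is that every newly created edge either stays within $K$ (branching) or emanates from the vehicle being removed and points back into $K$ (the $\ll$-rule), so no edge ever runs from $K$ into $R$. Once that is established, localizing a hypothetical cycle to $R$ or to $K$, and hence both the acyclicity and the total-order conclusion, become routine.
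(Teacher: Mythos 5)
Your proof is correct and follows essentially the same route as the paper's: show that no inconsistent precedence is ever introduced, that the leaf condition $\kappa=\emptyset$ forces every vehicle to have been removed, and that the removal order yields a valid total ordering. Your version is in fact more rigorous than the paper's sketch --- the explicit acyclicity invariant (with the $R$-to-$K$ edge-direction observation) and the transitive-closure argument fill in the step the paper asserts as ``each vehicle removed from $K$ has precedence over all remaining vehicles in $K$,'' which literally holds only for the other sources in $\kappa$ and requires transitivity for the rest.
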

\begin{proof}
First note that precedence is only ever assigned between $k,k'\in \kappa$, whose members contain no precedence over each other by definition. Therefore, OBS never assigns an inconsistent precedence and is consistent with any initial precedence relations provided. As more precedences are assigned, some vehicle $k$ must be removed from $K$ eventually, allowing some vehicle $k' \succ k$ to join $\kappa$ eventually. By induction, \textit{every} vehicle in $K$ must eventually be added to $\kappa$, and thus be removed from $K$ eventually. Each vehicle removed from $K$ has precedence over all remaining vehicles in $K$. Therefore, the removal order from $K$ is a valid total ordering. At the leaf node, $\kappa = \emptyset$, so $K = \emptyset$ and all vehicles must be present in the total ordering returned.
\end{proof}

\begin{theorem}
Given that a crossing order exists, OBS with $n_\text{orders} = \infty$ finds the optimal \edit{constant-speed} crossing order, \ie OBS is \edit{asymptotically} optimal and complete.
\end{theorem}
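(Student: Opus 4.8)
The plan is to prove the two halves of the statement — \emph{completeness} (OBS returns some order) and \emph{asymptotic optimality} (it returns a minimum-delay constant-speed order) — using three ingredients: finiteness of the OBS search tree, a monotonicity lemma for \textsc{KATS}, and an exchange argument that justifies the $\ll$-shortcut. First I would argue \emph{finiteness}: each recursive call to \textsc{Expand} either strictly shrinks $K$ (every pass of the \textbf{while} loop removes a vehicle after committing $k \prec \kappa_{\neq k}$) or else adds exactly one precedence between two currently incomparable members of $\kappa$ and recurses at most twice; since the vehicle set and the set of unordered pairs are finite, the recursion depth, and hence the whole DFS tree, is finite, so OBS terminates — and because the tree is finite, the budget passed down along every path stays $\infty$ and never binds, so both children of every branch node are fully explored. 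Next I would prove the \emph{monotonicity lemma}: if \textsc{KATS} is run on a vehicle $k$ under two profiles of preceding departure times with $t_d({\prec} k, z) \le t_d'({\prec} k, z)$ for all $z$, then the resulting intervals $[t_a(k,z), t_d(k,z)]$ are pointwise no later under the first than under the second; this is immediate from Equation~\ref{eq:t_a} and the closed forms in \textsc{KATS}, and by induction along the crossing order (exactly as in the proof of Theorem~1) it follows that imposing \emph{additional} precedences — taking any $\bm{\prec} \subseteq \bm{\prec}'$ — only weakly delays every vehicle, while \emph{relaxing} precedences only weakly advances every vehicle and so only lowers each vehicle's peak attainable approach speed and hence its required deceleration, making \textsc{KATS}-feasibility preserved under relaxation. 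Combining these: if a crossing order exists it is feasible, hence every partial order $\bm{\prec}$ on the path that builds toward it is feasible, so none of the feasibility guards in \textsc{Expand} prunes that path; by the soundness theorem proved above, the corresponding leaf yields a valid crossing order, giving completeness.

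For optimality, fix a minimum-total-delay constant-speed crossing order $\sigma^\star$; its delay is well-defined via \textsc{KATS} by Theorem~1, so it suffices to show OBS visits a leaf whose order attains the same total delay. The only ways OBS can fail to reach such a leaf are the feasibility prunings (already ruled out) and the $\ll$-shortcut, where OBS commits to $k \prec \kappa_{\neq k}$ with no branching whenever some $k \in \kappa$ has $k \ll k'$ for every other $k' \in \kappa$. The crux is an \emph{exchange argument} showing this commitment loses no optimality: starting from \emph{any} completion of the current $\bm{\prec}$ in which $k$ does not cross before all other current $\kappa$-members, move $k$ together with the down-set ${\prec} k$ to the front of the as-yet-unordered vehicles. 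This is well-defined and cycle-free because the members of $\kappa$ are pairwise $\bm{\prec}$-incomparable and ${\prec} k$ has already been linearized consistently; and it does not increase any vehicle's delay, because by the definition of $\ll$, on every shared subzone $z \in \big(\cZ(r_k) \cup \cZ(r_{{\prec} k})\big) \cap \big(\cZ(r_{k'}) \cup \cZ(r_{{\prec} k'})\big)$ the departures of $k$ and ${\prec} k$ already fall before the arrivals of $k'$ and ${\succ} k'$, so after the move the extra predecessors $k, {\prec} k$ contribute no binding term to the $\max$ in Equation~\ref{eq:t_a} for $k'$ or its successors, while $k$ and ${\prec} k$, now with strictly fewer predecessors, are advanced by the monotonicity lemma. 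Hence restricting to orders in which $k$ goes first is without loss of optimality; applying this at every shortcut node and using that at genuine branch nodes OBS explores both $k \prec k'$ and $k' \prec k$, the tree OBS traverses contains a leaf with delay equal to that of $\sigma^\star$, and since OBS returns the minimum-delay order among all visited leaves, it returns an optimal constant-speed crossing order.

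I expect the exchange argument to be the main obstacle. Two points need real care: that ``moving $k$ and ${\prec} k$ to the front'' yields a total order still consistent with $\bm{\prec}$ and with heuristic rule~1 (it does, since ${\prec} k$ is a down-set and only earlier placements are introduced), and — more delicately — that the $\ll$ hypothesis, which constrains only the \emph{shared} subzones of the two sides, suffices to conclude that the non-shared subzones of $k$ and ${\prec} k$ create no constraint on $k'$ or ${\succ} k'$ and vice versa, so that the monotonicity lemma is applied to exactly the right departure-time profiles. A secondary subtlety is the feasibility-preservation claim used in both halves; if one prefers to sidestep reasoning about peak approach speeds, one can instead observe that the \textsc{KATS}-feasibility test as implemented depends only on each vehicle's route geometry, kinematic bounds, and initial speed — subzone-timing constraints only ever slow a vehicle down, never force a sharper deceleration — so feasibility is in fact order-independent, and once one feasible order is assumed to exist the guards never prune a reachable leaf.
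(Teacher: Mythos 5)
Your proof is correct and follows essentially the same route as the paper's: the $\ll$-shortcut is justified by showing that promoting $k$ delays no other vehicle (the paper's case 1, which you expand into an explicit exchange argument), the binary branch is handled by observing that one child is consistent with the optimum and cannot be pruned because its precedence constraints are a subset of the optimum's (the paper's case 2, for which you supply the missing monotonicity lemma), and termination follows from the finite number of precedence pairs. The additional care you take with the $n_\text{orders}=\infty$ budget, the order-independence of KATS feasibility, and the shared-versus-non-shared subzone issue only makes explicit what the paper's sketch leaves implicit.
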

\begin{proof}
We show that some branch of the OBS tree must reach an optimal crossing order, if one exists. A node in the OBS tree must add an optimal precedence relation along some branch. There are two cases:

\textbf{1)} $k \prec k'$ is added $\forall k' \in \kappa_{\neq k}$. In this case, all vehicles $k''\in K\setminus\kappa$ are already preceded by $k$ or preceded by some $k' \in \kappa_{\neq k}$. For the former, $k$ must cross earlier than $k''$ by definition. For the latter, $k \ll k' \prec k''$ implies that, at every subzone $z$, the latest subzone departure time of $k$ and all vehicles preceding $k$ is already earlier than the earliest subzone arrival time of $k'$ and $k''$. Thus, assigning $k$ to precede all other vehicles $k''\in K_{\neq k}$ does not delay the crossing of any $k''$, and giving $k$ precedence is optimal.

\textbf{2)} $k \prec k'$ is added to one child branch and $k' \prec k$ to the other. This case must be optimal because either $k \prec k'$ or $k' \prec k$ is consistent with the optimal crossing order. Without loss of generality, assume that $k \prec k'$ is consistent with optimal, then KATS must find that replanning arrivals times for $k'$ and $k''$ is feasible because ${\bm \prec}$ only has a subset of the precedence constraints of the optimal crossing order.

As every non-leaf OBS node adds at least one optimal precedence along some branch, OBS must reach the optimal leaf node because there are at most $|K|^2$ possible precedence relations total. The leaf node corresponds to a sound crossing order, as shown earlier. Thus OBS always finds the optimal crossing order and is complete.
\end{proof}

\subsection{Trajectory Optimization}
\label{sec:traj_opt}
With the total crossing ordering of vehicles, we obtain trajectories for each vehicle one-by-one, accounting for the positions of all previously planned vehicles and obeying the scheduled arrival times at subzones. KATS may be overly optimistic and inconsistent with trajectory optimization as KATS does not prevent collision between vehicles outside the intersection. Thus, following the scheduled times precisely may be infeasible. To ease infeasibility, we \textbf{1)} incrementally delay the scheduled time constraint until feasible \textbf{2)} allow all vehicles to exceed the turning speed limit except at the midpoint of a turn, which allows a vehicle to decelerate into a turn and accelerate out of a turn. Given planning horizon $T_p$, trajectory optimization for each vehicle is formulated as follows and optimized with a discretization $\text{d}t$:
\begin{multline}
    \hfill \max_{x(t), v(t)} \int_{0}^{T_p} v(t) \text{d}t\hspace{0.5cm} \text{s.t.} \hfill \\
    x(0) = 0 \hspace{1cm} v(0) = v_0 \hspace{1cm} 0 \leq v(t) \leq \overline v\\
    \underline a \leq \frac{v(t) + v(t + \text d t)}{\text d t} \leq \overline a\\
    x(t + \text d t) - x(t) = \frac{v(t) + v(t + \text d t)}{2}\\
    \underline x \leq x(t) \leq \overline x \hspace{2cm} v(t_{mid}) \leq \overline v_{r, z}\\
    \label{eq:traj_opt}
\end{multline}
where $v(t_{mid})$ is the speed at the midpoint crossing time, $\overline{x}$ is the maximum safe position of a vehicle given its subzone arrival times and leading vehicles trajectories on both the entering and exiting lane, and the minimum position constraint $\underline{x}$ ensures that the vehicle departs a subzone on schedule. To obtain the steering angles along a route, we utilize a PID controller tracking the center of the route.

\subsection{Why is Crossing Order Useful?}
We acknowledge that the optimal arrival times \textit{consistent with an optimal crossing order} does not necessarily imply optimal arrival times \textit{in general} for minimizing delay. Indeed, similar to observed by \cite{ma2019searching}, optimal arrival times may not be consistent with any crossing order. An example can be obtained by manipulating our vehicle and subzone geometries. Let the intersection be a 10 by 10 grid of square subzones, and let each vehicle be the size of one subzone. One vehicle approaches the intersection along each of the four directions, symmetrically. Clearly, the optimal arrival times is obtained by simultaneously allowing all four vehicles pass the intersection. However, these arrival times are not consistent with any crossing order, because each vehicle enters some subzone before another vehicle. With an optimal crossing order of [\textit{up}, \textit{right}, \textit{down}, \textit{left}], only the first three vehicles can enter at the same time, and \textit{left} waits for \textit{up} to finish crossing before entering their shared subzone.

Nevertheless, since using trajectory optimization as a single-agent path planner is not practical, MAPF algorithms tend to use path planners on simplified kinematics instead, as we do with KATS, resulting in a mismatch between the trajectories planned with simplified kinematics and ones planned with trajectory optimization. Obtaining a crossing order allows us to plan trajectories with complex kinematics according to the crossing order, adding delays when necessary to ease infeasibility due to the mismatch before planning subsequent vehicles. On the other hand, while a classical MAPF algorithm may find the optimal symmetric solution for the described example in simplified kinematics, a mismatch with trajectory optimization may occur resulting in infeasibility, which cannot be resolved by adding delays as doing so may conflict with other vehicles' trajectories. Therefore, crossing orders may be more robust to model mismatch between the kinematics used in MAPF and the kinematics used in trajectory optimization.

\section{Default Experimental Setup}
\label{sec:setup}
We modify HighwayEnv \cite{highway-env} to simulate the system with discretization $\text{d}t = 0.1$s for $H = 1000$ timesteps. Fig.~\ref{fig:intersection_subzones} illustrates subzone geometries. We set arrival rate $\lambda = 1500$veh/hr/lane with initial speed $v_0 = 5$m/s. Crossing order computation occurs every $H_c = 100$ steps. Each vehicle is planned for a horizon $T_p$ which is sufficient for it to reach the end of its route. Maximum speed is $\overline v = 13$m/s, with $\overline v_{r, z} = 6.5$m/s on left turns and $\overline v_{r, z} = 4.5$m/s on right turns. A vehicle goes straight, turns left, and turns right with 60\%, 20\%, and 20\% chance, respectively. Each lane has width $w_\text{lane} = 4.5$m and length $\ell_\text{lane} = 250$m. Each vehicle has length $\ell_k = 5$m and width $2$m. Each intersection is a square with edge length $5w_\text{lane}$. The left turn radius is $3w_\text{lane}$ and the right turn radius is $2w_\text{lane}$. Vehicles collide when their bounding boxes overlap; for verifying algorithmic correctness, we do not add any temporal or spatial padding around each vehicle. We run all settings on $100$ environment seeds, where we quantify the 95\% confidence interval of the mean with bootstrap sampling. All methods are implemented in Python since KATS is very fast (around 10000 calls/s), unlike single-agent path planners for classical MAPF settings, which are often implemented in C++ for efficiency. Trajectory optimization uses CVXPY \cite{diamond2016cvxpy}.

\section{Experimental Results}
\label{sec:results}
We demonstrate the effectiveness of OBS against the FIFO order \cite{dresner2008multiagent}, MCTS \cite{xu2019cooperative,xu2021comparison}, and our own PP on various intersection configurations. All methods use KATS. As no code was provided, we implement MCTS to the best of our abilities, with the same traffic heuristics as PP and OBS.

\subsection{Delay vs Crossing Order Computation Overhead}
In Fig.~\ref{fig:delay_vs_compute}, we measure the average vehicle delay as a function of the computation overhead of $n_\text{orders} \in [2^0, 2^{14}]$ for PP, $n_\text{orders} \in [2^0, 2^{13}]$ for OBS, and $n_\text{simulations} \in [2^0, 2^9]$ for MCTS. We observe that OBS is significantly stronger than PP, which is still significantly stronger than MCTS. We note that 10s per crossing order computation is a very long computation time and much longer than practical for deployment; the previous work in cooperative driving \cite{xu2021comparison} plans for around 0.1s, albeit with C++. With 10s of computation, the corresponding throughputs for the FIFO, MCTS, PP, and OBS configurations are 1740, 2050, 2080, and 2160veh/hr with confidence interval of ±20veh/hr. 

Interestingly, though the same traffic heuristics are used, the best solution quality of PP and MCTS is similar to the worst solution quality for OBS, obtained with $n_\text{orders} = 1$ and orders of magnitude less computation. We initially conjectured that the early plateauing performance of PP and MCTS may be due to the use of traffic heuristics, which may prevent finding the optimal solution. As such, we attempt to disable heuristic rule 2 for selecting the first 10 vehicles of the ordering; however, we find that the performance is significantly worse. For example, doing so for PP with computation times of 0.1s and 2s per crossing order result in average delays of 8.4s and 6.0s, respectively, both significantly worse than PP with heuristics always enabled. Thus, the difference in performance of the methods is not due to the traffic heuristics used. Rather, OBS seems to have a significant algorithmic advantage by gracefully handling partial orderings rather than searching for total orderings.

\begin{figure}[!t]
\centering
\includegraphics[width=\columnwidth]{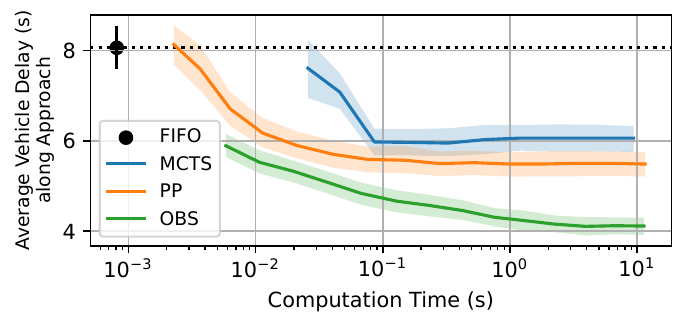}
\caption{\textbf{Delay vs computation time} per crossing order replan.}
\label{fig:delay_vs_compute}
\end{figure}

\subsection{Robustness to Intersection Configurations}
In Table~\ref{tab:configs}, we probe the robustness of our method under different conditions by varying the arrival rates $\lambda$, lane lengths $\ell_\text{lane}$, turning speeds $\overline{v}_{r, z}$ for [straight, right- and left-turn], and control horizon $H_c$. All methods are run for similar time, around 0.1s per crossing order. We find that the relative performance of all methods are consistent across configurations. The effects of arrival rate and turning speed are intuitive, so we focus on the other configurations.

For the short lane length $\ell_\text{lane} = 50$m, we observe very little gap between MCTS, PP, and OBS. This is likely due to the much smaller search space, as a $50$m lane typically contains around 3 to 4 vehicles per lane, while longer lanes contain significantly more vehicles. We see that gaps between different methods increases with the problem complexity.

Regarding the control horizon $H_c$, we observe that more frequent replans is actually slightly harmful for FIFO. At each replan, the crossing order stays constant for FIFO, but the arrival times are updated by KATS and the trajectories replanned. Since there is a mismatch between KATS and trajectory optimization, arrival times planned by KATS at step $h$ deviates from those at step $h - H_c$, shifting the constraints for trajectory optimization. This mismatch affects search-based methods as well, resulting in lowest delay at $H_c = 100$; for $H_c = 200$, delay is increased due to insufficient replans.

\begin{table}
    \caption{\textbf{Average delay (s) vs intersection configurations}}
    \label{tab:configs}
    \centering
    \begin{tabular}{ccccc}
      \toprule
        \bfseries Arrival Rate $\lambda$ (veh/hr) & \bfseries FIFO & \bfseries MCTS & \bfseries PP & \bfseries OBS\\
        \midrule
        1000 &6.6 &4.0 &3.6 & \textbf{3.2}\\
        1500 &9.6 &6.0 &5.6 & \textbf{4.7}\\
        2000 &11 &7.5 &7.0 & \textbf{6.0}\\
        $\left[500,2000,1000,1200\right]$ &6.9 &4.3 &4.0 & \textbf{3.3}\\
        \midrule
        \bfseries Lane Length $\ell_\text{lane}$ (m) & & & &\\
        \midrule
            50&3.4 &2.8 &2.7 & \textbf{2.7}\\
        100&7.5 &4.8 &4.6 &\textbf{4.1}\\
        250&9.6 &6.0 &5.6 & \textbf{4.7}\\
        500&16 &10 &9.0 & \textbf{7.8} \\
        \midrule
        \textbf{Crossing Speed} $\overline{v}_{r, z}$ \textbf{(m/s)}  & & & &\\
        \midrule
        $\left[13, 6.5, 4.5\right]$&9.6 &6.0 &5.6 & \textbf{4.7}\\
        $\left[13, 13, 13\right]$ &3.4 &2 &1.8 & \textbf{1.6}\\
        \midrule
        \bfseries Control Horizon $H_c$ (dt=0.1s) & & & &\\
        \midrule
        25&12 &8.1 &7.4 & \textbf{6.3}\\
        50&10 &6.7 &6.1 & \textbf{5.3}\\
        100&9.6 &6.0 &5.6 & \textbf{4.7}\\
        200&8.7 &6.5 &6.4 & \textbf{5.8}\\
        \bottomrule
    \end{tabular}\label{tab:configs}
\end{table}

\subsection{Delay vs Crossing Geometry}
In Fig.~\ref{fig:turn_delay}, we examine the delay for each crossing geometry (left-turn, straight, and right-turn). While OBS significantly reduces delay for routes with all geometries, it especially reduces the delay (compared to other methods) for the straight route through the intersection, which permits the highest crossing speed. For example, while the right-turn delay is higher than the straight delay for other methods, the straight delay is lower for OBS. While less apparent, similar effect can be seen for the left-turn delay. We conjecture that this OBS behavior may be due to two reasons: \textbf{1)} the delay of a vehicle going straight is the greatest if the vehicle is forced to wait at the intersection, so OBS may prioritize straight-moving vehicles \textbf{2)} a straight crossing takes the least amount of time and is less disruptive.

\begin{figure}[!t]
\centering
\includegraphics[width=\columnwidth]{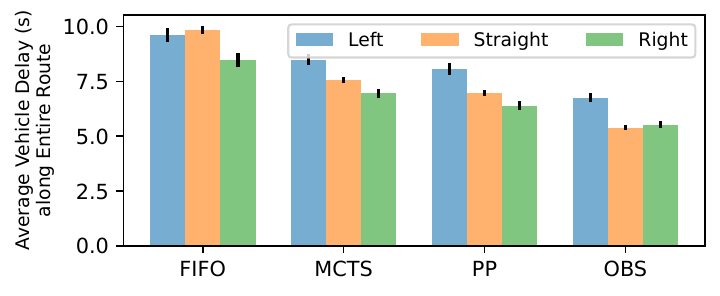}
\caption{\textbf{Delay along entire route vs crossing geometry}}
\label{fig:turn_delay}
\end{figure}

\section{Conclusions}
In this work, we seek to bridge the gap between the robotics community and the control / intelligent transportation communities. Future directions could identify other settings in traffic and robotics where the crossing order may be helpful, as well as extending the proposed algorithm to mixed traffic settings where the stochasticity of human driving behavior must be addressed. We also hope that additional insights from the robotics community may guide future algorithms for coordinating CAVs in large-scale and general traffic scenarios.

% \addtolength{\textheight}{-12cm}
% \section*{APPENDIX}
\section*{ACKNOWLEDGMENT}
This work was supported by the National Science Foundation (NSF) CAREER award (\#2239566) and the MIT Amazon Science Hub.

\bibliographystyle{IEEEtran}
\bibliography{main}

\end{document}